\tikzstyle{vertex}=[circle, draw, inner sep=0pt, minimum size=6pt]
\newcommand{\vertex}{\node[vertex]}
\newtheorem{thm}{Theorem}
\newtheorem{note}{Note}
\newtheorem{exa}{Example}
\newtheorem{defn}{Definition}
\newcommand{\mc}{\mathcal}
\newcommand{\hs}{\hspace{1mm}}
\title{Ubergraphs: A Definition of a Recursive Hypergraph Structure}
\begin{document}

\author{Cliff Joslyn, Kathleen Nowak}

\date{}
\fancypagestyle{empty}{
\fancyfoot[L]{Date: \today \\
PNNL Information Release: PNNL-26402}
}
\maketitle

\section{Introduction}

Partly in service of exploring the formal basis for Georgetown
University's AvesTerra database structure, we formalize a recursive hypergraph data structure, which we call an ubergraph. This type of data structure has been alluded to in passing but no formal explication exists to our knowledge. \footnote{https://en.wikipedia.org/wiki/Hypergraph\#Generalizations} As hypergraphs generalize graphs by allowing edges to have more than two vertices, ubergraphs generalize hypergraphs by allowing edges to contain other edges as vertices. Thus, all graphs are hypergraphs and all hypergraphs are ubergraphs.

The ability to do indirection in graph data structures by ``quoting''
or ``pointing to'' edges is absolutely central in graph-based data
science, and is accomplished in such systems by a variety of {\it ad
hoc} mechanisms such as reification. Hypergraphs are frequently used
as part of that armamentarium, but ubergraphs are a more robust representation framwork.

Note that here we deal only with undirected hyper- and
ubergraphs. Direction and/or orientation could prove very valuable,
but await further consideration \cite{GaGLoG93}.

\section{Hypergraphs}

A hypergraph is a generalization of a graph in which an edge can connect any number of vertices.

\begin{defn}
A \textbf{hypergraph} $H$ is a pair $(V,E)$ where $V$ is a set of vertices and $E \subseteq \mc{P}(V)$ is a set of non-empty subsets of $V$. The elements of $E$ are called hyperedges. 
\end{defn}


\begin{note}
An abstract simplicial complex is a hypergraph whose edge set is closed under subset.
\end{note}

\noindent The incidence matrix and Levi graph of a (hyper)graph express vertex-edge membership.

\begin{defn}
Let $H = (V,E)$ be a hypergraph with $|V| = n$ and $|E| = m$. The \textbf{incidence matrix} of $H$ is the $n \times m$ matrix $M$ defined by
$$
M_{ij} = 
\begin{cases}
1 & \text{if } v_i \in e_j \\
0 & \text{otherwise}.
\end{cases}
$$
\end{defn}

\begin{defn}
Let $H= (V,E)$ be a hypergraph with $|V| =n$ and $|E| =m$. 
Then the {\bf Levi graph} is the bipartite graph $G =
(V \hs \dot\cup \hs E, E')$, where $(v_i,
e_j) \in E'$ if and only if $v_i \in e_j$.
\end{defn}

\begin{exa}

Let $H$ be the hypergraph with vertex set $V = \{1,2,3,4,5\}$ and edge
set 
	\[ E = \{\{1\},\{1,3\}, \{2,3\}, \{1,3,5\}\}.	\]
	The incidence matrix for $H$ is 
$$
M = 
\begin{bmatrix}
1 & 1 & 0 & 1 \\
0 & 0 & 1 & 0 \\
0 & 1 & 1 & 1 \\
0 & 0 & 0 & 0 \\
0 & 0 & 0 & 1
\end{bmatrix}
$$
and the Levi graph representation is 

\[\begin{tikzpicture}[
	every edge/.style={
        draw,
        postaction={decorate,
                    decoration={markings,mark=at position 1 with {\arrow{>}}}
                   }
        }
        ]
	\vertex[fill] (s5) at (0,1) [label=left:$v_{5}$] {};
	\vertex[fill] (s4) at (0,2) [label=left:$v_{4}$] {};
	\vertex[fill] (s3) at (0,3) [label=left:$v_{3}$] {};
	\vertex[fill] (s2) at (0,4) [label=left:$v_{2}$] {};
	\vertex[fill] (s1) at (0,5) [label=left:$v_{1}$] {};
	\vertex[fill] (t4) at (1,1.5) [label=right:$e_{4}$] {};
	\vertex[fill] (t3) at (1,2.5) [label=right:$e_{3}$] {};
	\vertex[fill] (t2) at (1,3.5) [label=right:$e_{2}$] {};
	\vertex[fill] (t1) at (1,4.5) [label=right:$e_{1}$] {};

	\path
	 	
	 	(s1) edge (t1)
	 	(s1) edge (t2)
	 	(s1) edge (t4)
	 	(s2) edge (t3)
	 	(s3) edge (t2)
	 	(s3) edge (t3)
	 	(s3) edge (t4)
	 	(s5) edge (t4)
	 	
	;
\end{tikzpicture}\]

\end{exa}

Where the Levi graph represents the vertex-edge membership relation
$\in$, since $E \subseteq {\cal P}(V)$ is a set system of $V$, it also
manifests edge-edge inclusion as a partial order on the edges through
$\subseteq$. This can also be read from the Levi graph as $N^-(e_i) \subset N^-(e_j)$ if and only if $e_i \subset e_j$. In our example, we have $e_1 \subset e_2 \subset e_4$,
while $e_3$ is non-comparable.

\section{Ubergraphs}

One way to generalize hypergraphs is to allow edges to contain not
only vertices but other edges. For a finite set $X$, we define 
$$
\mc{P}(X)^k = \mc{P}\left(\bigcup_{i = 0}^k P_i\right), \hs \text{ where } P_0 = X \hs \text{ and } P_i = \mc{P} \left(\bigcup_{j = 0}^{i-1} P_j \right) \text{ for } i \geq 1.
$$
\begin{defn}

A \textbf{depth $k$ ubergraph $U$} is a pair $(V,E)$ where $V$ is a set of fundamental
vertices and $E \subseteq \mc{P}(V)^k$ is a finite set of uberedges. Additionally, if $s \notin V$ belongs to an edge, we require that $s$ is itself an edge.
\end{defn}

\begin{note}
Every hypergraph is a depth $0$ ubergraph.  
\end{note}

Since uberedges are allowed to contain other edges, we call the elements of 
$$
V \hs \cup \hs \left(\bigcup_{e \in E} e \right) 
$$
\textit{vertices} and the elements of $V$ \textit{fundamental vertices}.

Let $U = (V,E)$ be an ubergraph with $|V| = n$ and $|E| = m$. The \textit{incidence matrix} of this type of hypergraph is a matrix $M$ of order $(n+m) \times m$ where 
$$
M_{xy} = 
\begin{cases}
1 & \text{if } x \in y \\
0 & \text{otherwise}.
\end{cases}
$$
Moreover, the Levi graph of a hypergraph
generalizes to what we will call the \textit{uber-Levi graph} of
$U$ to express uberedge membership. It has one vertex corresponding to each fundamental vertex and edge of $U$
and a directed edge from $x$ to $y$ if $x$ is a member of $y$ in $U$.

\begin{exa}
Let $U$ be the ubergraph with fundamental vertex set $V = \{1,2,3\}$ and edges 
\begin{eqnarray*}
E &=& \{e_1, e_2, e_3, e_4, e_5\} \\
&=& \{\{1\}, \{1,3\}, \{1,3,e_1\}, \{2,e_2\}, \{1,e_4\}\} \\
&=& \{\{1\}, \{1,3\}, \{1,3,\{1\}\}, \{2, \{1,3\}\}, \{1, \{2, \{1,3\}\}\} 
\end{eqnarray*}

The incidence matrix for $H$ is 
$$
M = 
\begin{bmatrix}
1 & 1 & 1 & 0 & 1\\
0 & 0 & 0 & 1 & 0 \\
0 & 1 & 1 & 0 & 0\\
0 & 0 & 1 & 0 & 0\\
0 & 0 & 0 & 1 & 0\\
0 & 0 & 0 & 0 & 0\\
0 & 0 & 0 & 0 & 1\\
0 & 0 & 0 & 0 & 0 
\end{bmatrix}
$$
and the uber-Levi graph representation is 
 
\begin{figure}[H]
\centering
\includegraphics[scale=.9]{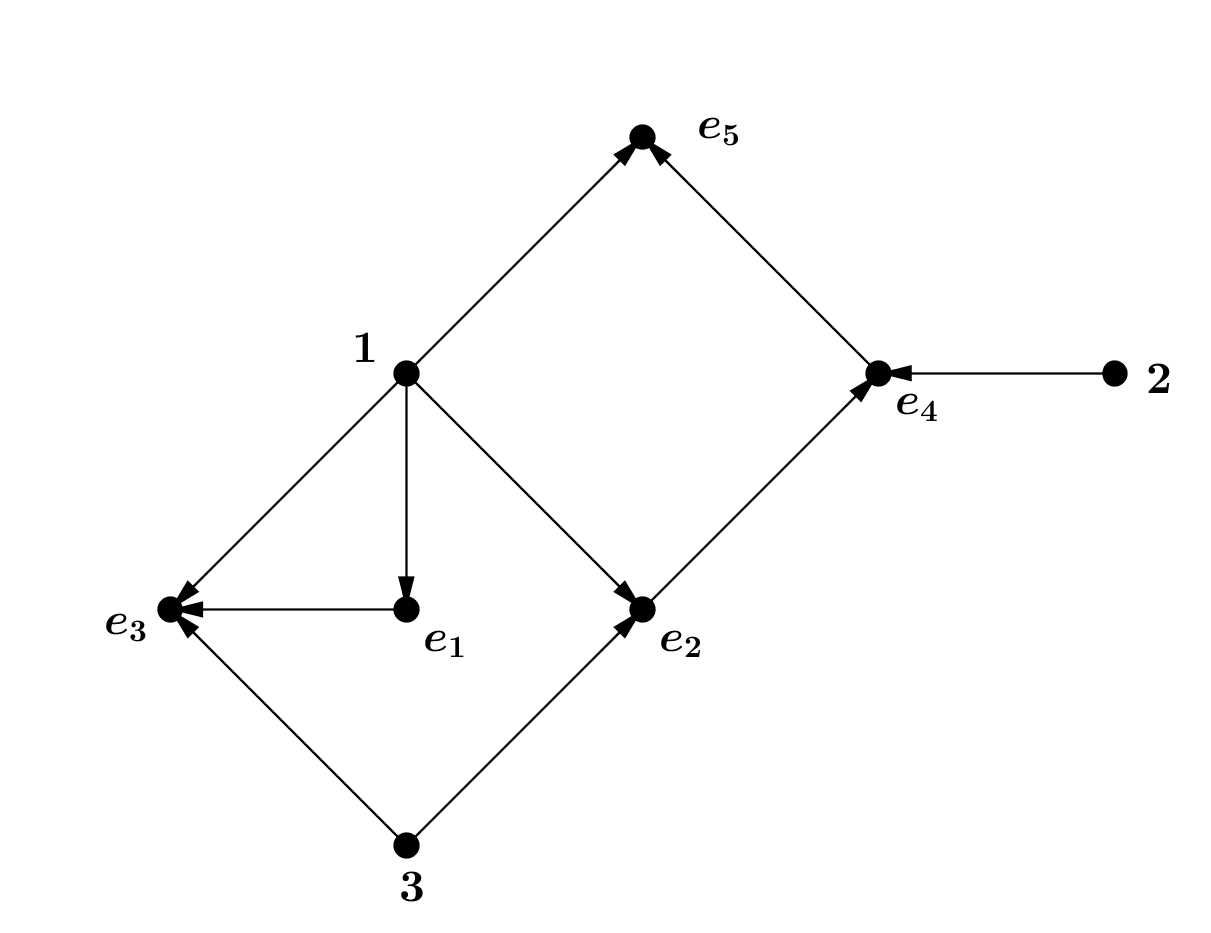}
\end{figure}

\end{exa}

\begin{note}
The uber-Levi graph is a directed acyclic graph (DAG). The
roots (vertices with no in-neighbors) correspond to the fundamental vertices of $U$,
and the vertices with positive in degree correspond to the edges of
$U$. Moreover, every $DAG$ yields an ubergraph.  
\end{note}

Where the uber-Levi graph represents the vertex-edge membership relation $\in$, since $E \subseteq  \mc{P}\big( \bigcup_{i=0}^\infty P_i \big)$ is a set system of $\bigcup_{i=0}^\infty P_i$, it also manifests edge-edge inclusion as a partial order on the edges through $\subseteq$. As before, this can be read from the uber-Levi graph since $e_i \subset e_j$ if and only if $N^-(e_i) \subset N^-(e_j)$. In our example, we have $e_1 \subset e_2 \subset e_3$ and $e_1 \subset e_5$, while $e_4$ is non-comparable. 

As constructed so far, ubergraphs can express the syntax of
generalized graph data structures like AvesTerra, where nodes can have
a variable number of attributes, and in turn those attributes can be
other nodes. In our example, we have $e_1 \in e_3$: the uberedge
$e_3$ has another uberedge $e_1$ as an element.

Additionally, it has been discussed that AvesTerra
may wish to model situations where edges can refer to 
themselves, either directly or indirectly. This corresponds to dropping the requirement: If $s \in P_i, i > 0$, belongs to an edge, then $s$ is itself an edge. In our example, if we were
to change the definition of $e_5$ from $e_5 = \{1,e_4\}$ to $e'_5 = \{1,e_4, e_2 \}$, then the uber-Levi graph would change to 
\begin{figure}[H]
\centering
\includegraphics[scale=.9]{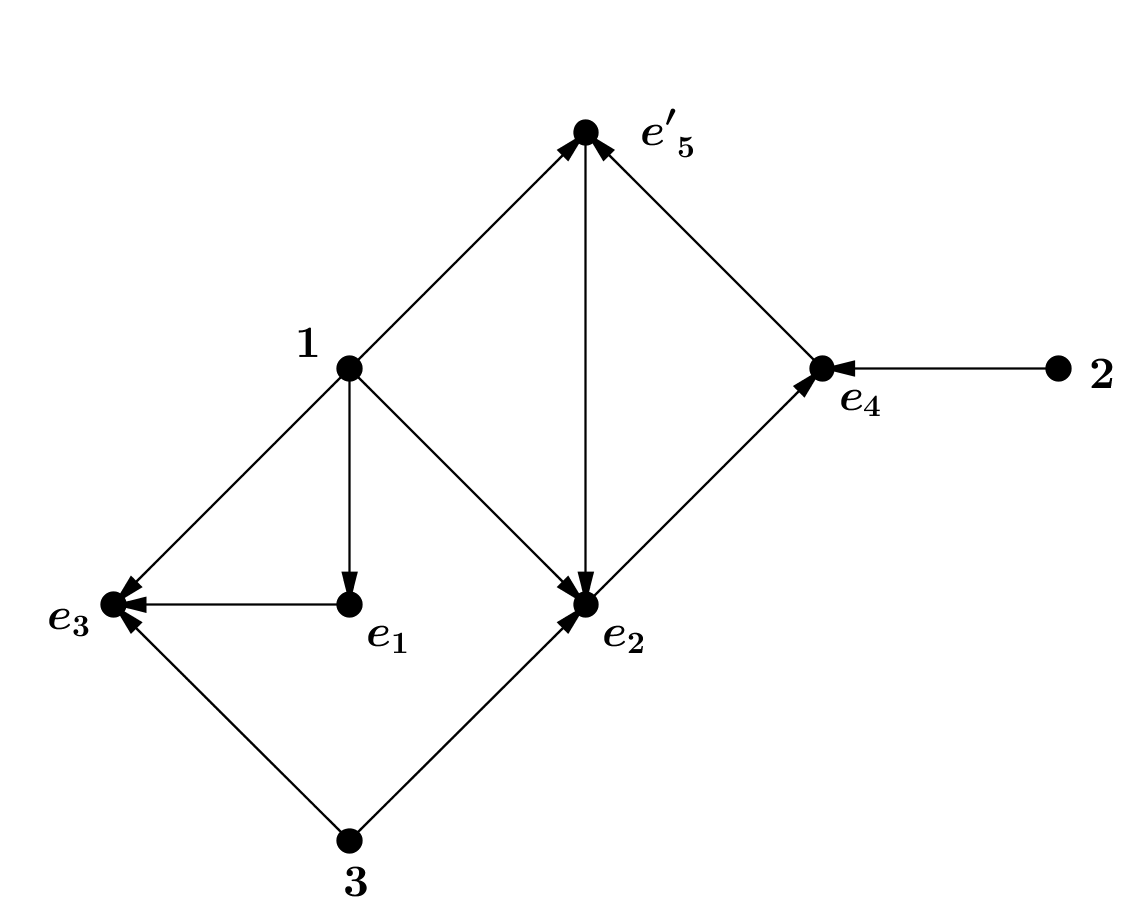}
\end{figure}

Note the inclusion of a cycle, making the uber-Levi graph now a
general directed graph. Allowing (ultimately)
expressions like $e = \{ e \}$, and thus arbitrary cycles in the
uber-Levi graph,  violates the axiom of foundation. The vertex set is
no longer well defined, and non-well-founded sets would need to be
invoked.

\subsection{Basic Concepts}

Ubergraphs are a generalization of hypergraphs, hence many of the definitions of hypergraphs carry verbatim to ubergraphs. Most of the vocabulary given here is generalized from \cite{Bretto}.

Let $U = (V,E)$ be a depth $k$ ubergraph with $|V| = n$ and $|E| = m$. Let $M$ be the incidence matrix of $U$. By definition the \textit{empty ubergraph} has $V = E = \emptyset$ and we call any ubergraph with $V \neq \emptyset$ and $E = \emptyset$ a \textit{trivial ubergraph}.  For $e \in E$, we define 
$$
V^0(e) = \bigcap_{\substack{S \subseteq V, \\ e \subseteq \mc{P}(S)^k}} S 
$$
to be the minimum set of fundemental vertices in an ubergraph containing $e$.
Then we have the following analogs of subgraph:
\begin{itemize}
\item For $E' \subseteq E$, the ubergraph $(V, E')$ is called a \textit{sububergraph}.
\item For $V' \subseteq V$, the \textit{induced sububergraph} $U[V']$ of the ubergraph $U$ is the ubergraph $U(V') = (V', E')$ where
$$
E' = \{e \in E \hs | \hs V^0(e) \subseteq V'\}.
$$
\end{itemize}
\noindent Let $U$ be the ubergraph with vertices $V = \{1,2,3,4,5\}$ and edge set 
$$
E = \{\{1,2\}, \{1,\{1,2\}\}, \{\{3\},\{\{1,4\}\}\}, \{1,4,5\}\}.
$$
Then $U' = (\{1,2,3,4,5\}, \{\{1,2\},\{1,4,5\}\})$ is a sububergraph of $U$ and $U[\{1,2\}] = (\{1,2\}, \{\{1,2\},\{1,\{1,2\}\}\})$.

Two vertices $x,y$ of an ubergraph are \textit{adjacent} if there is an uberedge which contains both elements. Two uberedges are \textit{incident} if their intersection is not empty. The \textit{degree} of a vertex $x$ is the number of uberedges containing $x$. 

Let $x, y \in V \hs \cup \hs E$. A \textit{path} $P$ from $x$ to $y$ is a sequence 
$$
x = x_1, e_1, x_2, e_2, \dots, x_s, e_s, x_{s+1} =y
$$
such that 
\begin{itemize}
\item $x_1, x_2, \dots, x_{s+1}$ are all distinct vertices except possibly $x_1 = x_{s+1}$,
\item $e_1, e_2, \dots, e_s$ are distinct uberedges, and
\item $x_i, x_{i+1} \in e_i$ for all $i = 1, 2, \dots, s$.
\end{itemize}

If $x=y$ the path is called a \textit{cycle}. The integer $s$ is the \textit{length} of the path. We say that $U$ is \textit{connected} if for every pair of vertices, there is a path which connects these vertices; otherwise we describe $U$ as \textit{disconnected}. 

\subsection{Matrices, Ubergraphs, and Entropy}

Let $U = (V,E)$ be an ubergraph with $|V| =n$ and $|E| = m$. As stated earlier, the incidence matrix of an ubergraph is an $(n+m) \times m$ matrix $M$ where
$$
M_{xy} = 
\begin{cases}
1 & \text{if } x \in y \\
0 & \text{otherwise}.
\end{cases}
$$

Many basic concepts can be computed from the incidence matrix. For example, the degree of a vertex $x$ is 
$(M \mathbbm{1})_x$ and two uberedges are incident if and only if the inner product of the corresponding rows of $M$ is nonzero. Further, the incidence matrix of any (induced) sububergraph is a submatrix of $M$.

Next, we define the \textit{adjacency matrix} $A(U)$ of $U$ to be the square matrix whose rows and columns are indexed by the fundamental vertices and edges of $U$ such that for all
$x, y \in V \hs \cup \hs E$,
$$
A_{xy} = 
\begin{cases}
|\{e \in E \hs | \hs x,y \in e\}| & \text{if } x \neq y \\
0 & \text{otherwise}.
\end{cases}
$$
Let $D(x) = \sum\limits_{y \in V \hs \cup \hs E} a_{x,y}$. Then the \textit{Laplacian matrix} of $U$ is given as 
$$
L(U) = D -A(U) 
$$
where $D = diag(D(v_1), \dots, D(v_n), D(e_1), \dots, D(e_m))$. Note that $L(U)$ is Hermitian so it's eigenvalues are real. Further, by an application of the Gershgorin disk Theorem, they must be nonnegative. Since $\sum\limits_{i=1}^{n+m} \lambda_i = Tr(L(U)) =  \sum\limits_{i=1}^{n+m} D(x_i) := \hat{d}$, we have that 
$$
\left(\mu_i\right)_{i = 1}^{n+m} := \left(\frac{\lambda_i}{\hat{d}}\right)_{i = 1}^{n+m}
$$
is a discrete probability distribution. Thus, we can define the \textit{algebraic ubergraph entropy} of $U$ by
$$
I(U) = -\sum_{i = 1}^{n+1} \mu_i \log_2(\mu_i).
$$

\subsection{Similarity and Metric on Ubergraphs}

When we have two structures, one of the most important tasks is to compare them. This comparison is done with an isomorphism. 

\begin{defn}
Let $U = (V,E)$ and $U' = (V',E')$ be two ubergraphs. We say that $U$ and $U'$ are \textbf{isomorphic}, denoted $U \simeq U'$, if there exists a bijection
$$
\varphi: V \rightarrow V'
$$
such that 
$$
e \in E \text{ if and only if } \varphi(e) := \{\varphi(x) \hs | \hs x \in e\} \in E'.
$$
\end{defn}

This similarity measure is manifested in both the uber-Levi graph and the incidence matrix. 

\begin{thm}
Two ubergraphs are isomorphic if and only if their uber-Levi graphs are isomorhpic.
\end{thm}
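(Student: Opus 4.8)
The plan is to prove both directions of the equivalence by exploiting the fact that the uber-Levi graph is essentially a faithful encoding of the membership relation $\in$ on $V \cup E$. The key observation is that an ubergraph isomorphism $\varphi$ is defined precisely so as to preserve the membership structure, and the uber-Levi graph is the directed graph whose edges record exactly that relation. So the two notions of isomorphism should be two languages describing the same combinatorial object.

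For the forward direction, I would start with an ubergraph isomorphism $\varphi : V \to V'$. By the isomorphism definition, $\varphi$ extends to edges via $\varphi(e) = \{\varphi(x) \mid x \in e\}$, and this gives a bijection $E \to E'$ (injectivity and surjectivity follow from the ``if and only if'' in the definition together with the bijectivity of $\varphi$ on $V$; one checks that the extension respects nesting by induction on the depth at which an element sits). This yields a single bijection $\Phi$ on $V \cup E$ sending fundamental vertices to fundamental vertices and edges to edges. The remaining step is to verify that $\Phi$ carries directed edges of the uber-Levi graph of $U$ to directed edges of the uber-Levi graph of $U'$: a directed edge $x \to y$ means $x \in y$, and I would argue $x \in y \iff \Phi(x) \in \Phi(y)$ directly from how $\varphi(e)$ is defined. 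Since $\Phi$ is a bijection on vertex sets preserving the arc relation in both directions, it is a graph isomorphism.

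For the reverse direction, I would begin with a graph isomorphism $\psi$ between the two uber-Levi graphs. Here the crucial point is that $\psi$ must respect the root/non-root distinction noted earlier in the excerpt: roots correspond to fundamental vertices and non-roots to edges. Since a graph isomorphism preserves in-degrees, it maps roots to roots, hence $\psi$ restricts to a bijection $V \to V'$ and a bijection $E \to E'$. I would then recover the ubergraph isomorphism by setting $\varphi = \psi|_V$ and showing, by induction on depth in the DAG, that $\psi$ agrees with the induced edge map $e \mapsto \{\varphi(x) \mid x \in e\}$: the in-neighbors of an edge vertex are exactly its members, and $\psi$ preserves in-neighborhoods, so the members of $\psi(e)$ are exactly the $\psi$-images of the members of $e$.

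I expect the main obstacle to be bookkeeping in the reverse direction, specifically confirming that $\varphi = \psi|_V$ is genuinely well-defined as an ubergraph isomorphism rather than merely a vertex-level map that happens to look right. The subtlety is that the ubergraph isomorphism definition insists $\varphi$ be a bijection of fundamental vertices whose recursive extension lands edges on edges; one must check by induction that the set-theoretic identity of each edge is fully determined by its in-neighborhood in the DAG, so that preserving in-neighborhoods forces the edge map to coincide with the canonical extension of $\varphi$. The acyclicity guaranteed by the foundation axiom is what makes this induction terminate, and I would flag that the argument relies on the well-foundedness established in the preceding discussion. The forward direction is comparatively routine once the extension of $\varphi$ to edges is shown to be a bijection.
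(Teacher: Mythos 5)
Your proposal follows essentially the same route as the paper's proof: the forward direction extends $\varphi$ to the uber-Levi vertex set and checks that arcs correspond to membership, and the reverse direction uses the in-degree-zero characterization of fundamental vertices to restrict $\psi$ to $V$, then recovers edge membership from in-neighborhoods by recursion on depth. Your treatment is, if anything, slightly more explicit than the paper's about why the induction on depth terminates and why $\psi$ must agree with the canonical extension of $\psi|_V$, a point the paper compresses into ``by recursively applying this argument.''
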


\begin{proof}
Let $U$ and $U'$ be ubergraphs with uber-Levi graphs $D$ and $D'$ respectively. First suppose that $U \simeq U'$ and let $\varphi: V(U) \rightarrow V(U')$ be an isomorphism. Define 
$$
\psi: V(D) \rightarrow V(D') \text{ by } \psi(v) = \varphi(v) \text{ and } \psi(e) = \varphi(e) = \{\varphi(x) \hs | \hs x \in e\}. 
$$
Then 
$$
(x,y) \in E(D) \stackrel{def}{\Leftrightarrow} x \in y  \stackrel{\simeq}{\Leftrightarrow} \psi(x) \in \psi(y) \stackrel{def}{\Leftrightarrow} (\psi(x),\psi(y)) \in E(D').
$$

Now suppose that $D \simeq D'$ and let $\psi: V(D) \rightarrow V(D')$ be an isomorphism. We claim that $\psi \vert_{V(U)}$ is an isomorphism from $U$ to $U'$. First note that $v$ is a fundamental vertex if and only if it's in-degree in the uber-Levi graph representation is zero. Thus, since isomorphisms preserve degree, we must have that $\psi\vert_{V(U)}$ is bijection from $V(U)$ to $V(U')$. Further,
$$
x \in e \stackrel{def}{\Leftrightarrow} (x,e) \in E(D) \stackrel{\simeq}{\Leftrightarrow} (\psi(x),\psi(e)) \in E(D') \stackrel{def}{\Leftrightarrow} \psi(x) \in \psi(e).
$$
Thus, by recursively applying this argument we have that 
$\psi\vert_{V(U)}$ is a bijection from $V(U)$ to $V(U')$ such that 
$e \in E(U)$ if and only if $\psi\vert_{V(U)}(e) \in E(U')$. 
\end{proof}

%

\bibliographystyle{plain}

\bibliography{uberbib1.bib}

\end{document}